\DeclareMathOperator{\KS}{\mathrm{C}\mskip 1mu}
\DeclareMathOperator{\edim}{\textit{dim}\mskip 1 mu}
\newcommand{\cnd}{\mskip 1mu | \mskip 1mu }
\newtheorem{theorem}{Theorem}
\newtheorem{proposition}{Proposition}
\newtheorem*{lemma}{Lemma}
\begin{document}
\title{Inequalities for entropies and dimensions}
\author{Alexander Shen\thanks{LIRMM, University of Montpellier, CNRS, Montpellier, France, \texttt{alexander.shen@lirmm.fr}, \texttt{sasha.shen@gmail.com}.}}
\date{}
\maketitle
\begin{abstract}
We show that linear inequalities for entropies have a natural geometric interpretation in terms of Hausdorff and packing dimensions, using the point-to-set principle and known results about inequalities for complexities, entropies and the sizes of subgroups.
\end{abstract}

\section{Introduction}

\subsection*{Inequalities for entropies}

Let $\xi_1,\ldots,\xi_m$ be jointly distributed random variables with finite ranges. Then, for every nonempty $I\subset\{1,\ldots,m\}$, we may consider the tuple of variables 
\[
\xi_I=\langle \xi_i \mid i\in I\rangle,
\] 
and its \emph{Shannon entropy} $H(\xi_I)$. Recall that the Shannon entropy of a random variable $\xi$ that takes $s$ values with probabilities $p_1,\ldots,p_s$ is defined as $H(\xi)=\sum_{i} p_i\log (1/p_i)$. In this way we get $2^m-1$ real numbers (for $2^m-1$ non-empty subsets of $\{1,\ldots,m\}$). Shannon pointed out some inequalities that are always true for those quantities. For example, for every two variables $\xi_1,\xi_2$ we have 
\[
H(\xi_1)\le H(\xi_1,\xi_2)\le H(\xi_1)+H(\xi_1),
\]
and for every triple of variables $\xi_1,\xi_2,\xi_3$ we have 
\[
H(\xi_1)+H(\xi_1,\xi_2,\xi_3)\le H(\xi_1,\xi_2)+H(\xi_1,\xi_3).
\]
The latter inequality corresponds to the inequality
\[
H(\xi_2,\xi_3 \cnd \xi_1)\le H(\xi_2\cnd \xi_1)+H(\xi_3\cnd\xi_1)
\]
for conditional entropies (defined as $H(\xi\cnd\eta)=H(\xi,\eta)-H(\eta)$).

For a long time no other valid inequalities for entropies (except positive linear combinations of Shannon's inequalities) were known. Then Zhang and Yeung~\cite{zhang-yeung} found an inequality that is not a positive linear combination of Shannon's inequalities, and since then a lot of other inequalities of this type (\emph{non-Shannon inequalities}) were found. It became clear that the set of all valid linear inequalities for entropies has a complex structure (see, e.g.,~\cite{matus-csirmaz}). On the other hand, it became also clear that this set is very fundamental since it can be equivalently defined in combinatorial terms, in terms of subgroups size, and in terms of Kolmogorov complexity (see~\cite[Chapter 10]{suv} for the historical account and references). In this paper we use these characterizations to get one more equivalent characterization of this set, now in terms of Hausdorff and packing dimensions. 

Let us recall briefly the characterizations in terms of Kolmogorov complexity and subgroup sizes; more details and proofs can be found in~\cite[Chapter~10]{suv}.

\subsection*{Inequalities for complexities and subgroup sizes}

For a binary string $x$, its \emph{Kolmogorov complexity} is defined as the minimal length of a program that produces this string. The definition depends on the choice of programming language, but (as Kolmogorov and Solomonoff noted) there exist \emph{optimal} programming languages that make the complexity function minimal up to an additive constant, and we fix one of them. In this way we define the function $\KS(x)$ up to a bounded additive term, so to get a meaningful statement we should consider the asymptotic behavior of complexity when the length of the strings goes to infinity. (For more details and proofs about Kolmogorov complexity see, e.g.,~\cite{suv}.) 

Let $x_1,\ldots,x_m$ be binary strings. A tuple $\langle x_1,\ldots,x_m\rangle$ of strings can be computably encoded as one string whose complexity is (by definition) the complexity of the tuple $\langle x_1,\ldots,x_m\rangle$ and is denoted by $\KS(x_1,\ldots,x_m)$. The change of the encoding changes the complexity of a tuple at most by $O(1)$, so this notion is well defined. As it was noted by Kolmogorov~\cite{kolm1968}, some inequalities for Shannon entropies have Kolmogorov complexity counterparts. For example, 
\[
\KS(x_1)+\KS(x_1,x_2,x_3)\le \KS(x_1,x_2)+\KS(x_1,x_3)+O(\log n)
\]
for all strings $x_1,x_2,x_3$ of length at most $n$; this statement corresponds to the inequality for entropies mentioned earlier. Note that this inequality is asymptotic; since $\KS(x)$ is defined up to $O(1)$ additive terms, some error term is unavoidable. We use $O(\log n)$ term instead of $O(1)$, so the difference between different versions of Kolmogorov complexity (e.g., plain and prefix complexity) does not matter for us.

It is easy to see that every linear inequality for complexities that is true for complexities with logarithmic precision, is also true for Shannon entropies: we replace strings by random variables, Kolmogorov complexity by entropy and omit the error term. It was proven by Romashchenko~\cite{romash} that the reverse implication is also true and the same linear inequalities are valid for entropies and complexities (the exact statement and the proof can be found in~\cite[Chapter 10]{suv}). This result remains valid if we allow the programs in the definition of complexity access some oracle (set of strings) $X$ (this is called \emph{relativization} in computability theory); in particular, the class of valid linear inequalities for complexities with oracle $X$ does not depend on $X$ (since it coincides with the class of valid inequalities for entropies).

Another characterization of the same class of inequalities (in terms of sizes of subgroups of some group and their intersections) was given by Zhang and Yeung~\cite{zhang-yeung}. They have shown that it is enough to consider some special type of random variables that correspond to a group and its subgroups: if a linear inequality is valid for random variables of this type, it is valid for all random variables. (See below Section~\ref{equivalence} for more details about this class.)

\subsection*{Dimensions and point-to-set principle}

The notions of Hausdorff dimension and packing dimension for sets in $\mathbb{R}^m$ are well known in the geometric measure theory, but for our purposes it is convenient to use their equivalent definitions provided by the point-to-set principle formulated by Jack Lutz and Neil Lutz~\cite{lutz}; the references to the classical definitions (and to previous results about effective dimensions) can be found there.

Let $\alpha=0.a_1a_2a_3\ldots$ be a real number represented as a binary fraction (the integer part does not matter, and we assume that $\alpha\in[0,1]$). Consider the Kolmogorov complexity of the first $n$ bits of $\alpha$ and then consider the limits
\[
\edim_H(\alpha) = \liminf_{n\to\infty} \frac{\KS(a_1\ldots a_n)}{n} 
\qquad\text{and}\qquad
\edim_p(\alpha) = \limsup_{n\to\infty} \frac{\KS(a_1\ldots a_n)}{n}
\]
called \emph{effective Hausdorff dimension} and \emph{effective packing dimension} of $\alpha$. Note that the classical dimension of every point is zero, so these notions do not have classical counterparts. Both dimensions (for every real $\alpha$) are between $0$ and $1$.

Now we switch from points to sets and for every set $A\subset[0,1]$ we define the \emph{effective Hausdorff} and \emph{effective packing} dimension  of $A$ as the supremum of corresponding dimensions of the points in $A$:
\[
\edim_H(A)=\sup_{\alpha\in A} \edim_H(\alpha)
\qquad \text{and} \qquad 
\edim_p(A)=\sup_{\alpha\in A} \edim_p(\alpha).
\]
This definition can be relativized by some oracle $X$; for that we replace the Kolmogorov complexity $\KS$ by its relativized version $\KS^X$. Adding oracle can make Kolmogorov complexity and effective dimension smaller; we denote the relativized effective dimensions by $\edim^X_H(A)$ and $\edim^X_p(A)$. The \emph{point-to-set principle} says that for every set $A$ there exists an oracle $X$ that makes the effective dimensions minimal and these minimal dimensions are classical Hausdorff and packing dimensions:
\[
\dim_H(A)=\min_{X} \edim^X_H(A)
\qquad\text{and}\qquad
\dim_p(A)=\min_{X} \edim^X_p(A)
\]
Note that we use ``$\edim$'' (italic) for effective dimensions and ``$\dim$'' for classical dimensions to distinguish between them (for sets; for individual points only effective dimensions make sense). We take this characterization as (equivalent) definition of Hausdorff and packing dimensions. 

Formally speaking, we should first prove that the minimal values are achieved for some oracle $X$; this is easy to see, because a countable sequence of oracles that give better and better approximations can be combined into one oracle. (Or we could just use ``$\inf$'' instead of ``$\min$'' in the definition.)

In the same way the dimensions of a set $A\subset \mathbb{R}^m$ are defined. Now a point $\alpha \in A$ has $m$ coordinates and is represented by a tuple $\langle\alpha_1,\ldots,\alpha_m\rangle$ of binary fractions. To define the effective dimension of $\alpha$ we now consider the Kolmogorov complexity of an $m$-tuple that consists of $n$-bit prefixes of $\alpha_1,\ldots,\alpha_m$, and divide this complexity by $n$. Taking the limits, we get the effective Hausdorff and packing dimension of the point $\alpha$; they are between $0$ and $m$. Now, taking supremum over all $\alpha$ in $A$, we define the effective dimensions of $A\subset\mathbb{R}^m$, and their relativized versions $\edim^X_H(A)$ and $\edim^X_p(A)$ are defined in a similar way. Taking the minimum over $X$, we get the classical Hausdorff and packing dimensions of the set $A$.

In the next section we give two examples where the inequalities for Kolmogorov complexities are used to prove some results about (classical) Hausdorff and packing dimensions. Then (Section~\ref{sec:general}) we generalize this approach to arbitrary inequalities for Kolmogorov complexity. Finally (Section~\ref{equivalence}) we prove the reverse statement that shows that this translation can be used to characterize exactly the class of all linear inequalities valid for entropies or complexities.

\section{Inequalities for dimensions: examples}

\subsection*{Example 1}

Consider the inequality for Kolmogorov complexities
\begin{equation}
2\KS(x,y,z) \le \KS(x,y)+\KS(x,z)+\KS(y,z)+O(\log n) \label{ineq1a}
\end{equation}
that is true for all strings $x,y,z$ of length $n$ (see, e.g.,~\cite[Section 2.3]{suv}). We apply it to the first $n$ bits of three real numbers  $\alpha,\beta,\gamma$ (considered as binary sequences; as we have said, we ignore the integer part and assume that all the real numbers are between $0$ and $1$).
\[
2 \KS((\alpha)_n, (\beta)_n, (\gamma)_n)\le 
\KS((\alpha)_n, (\beta)_n)+
\KS((\alpha)_n,  (\gamma)_n)+
\KS((\beta)_n, (\gamma)_n) + O(\log n).
\]
Here we denote by $(\rho)_n$ the first $n$ bits of a real number $\rho$ considered as a binary sequence. We can divide this inequality by $n$ and take $\limsup$ of both parts: recall that $\limsup (x_n+y_n)\le \limsup x_n+\limsup y_n$. In this way we get the inequality
\begin{equation}
2\edim_p (\langle\alpha,\beta,\gamma\rangle)\le \edim_p(\langle\alpha, \beta\rangle)+\edim_p(\langle\alpha,\gamma\rangle)+\edim_p(\langle\beta,\gamma\rangle).\label{ineq1b}
\end{equation}
This inequality can be relativized with arbitrary $X$ used as an oracle in the definitions of Kolmogorov complexity and effective packing dimension.

Now instead of one point $\langle \alpha,\beta,\gamma\rangle$ consider a set $S\subset \mathbb{R}^3$. Consider also three its two-dimensional projections onto each of three coordinate planes; we denote them by $S_{12}$, $S_{13}$ and $S_{23}$. The point-to-set principle says that the (classical) packing dimension of a set is the minimal (over all oracles) \emph{effective} packing dimension of the set relativized to the oracle, and the latter is the supremum (over all points in the set) of effective packing dimensions of its points. Fix an oracle $X$ that makes the effective packing dimension of all three projections minimal (we may combine the oracles for all three sets) and use it everywhere. Then for every point $s=\langle s_1,s_2,s_3\rangle$ in $S$ the effective packing dimension of $\langle s_1,s_2\rangle$ does not exceed the packing dimension of $S_{12}$, etc. Applying the inequality~(\ref{ineq1b}), we conclude that effective packing dimension of every point $\langle s_1,s_2,s_3\rangle\in S$ satisfies the inequality
\begin{equation*}
2\edim_p^X(\langle s_1,s_2,s_3\rangle)\le \dim_p S_{12}+\dim_p S_{13} + \dim_p S_{23}. \label{ineq1c}
\end{equation*}
where the effective dimension in the left hand side is taken with the fixed oracle (and classical dimensions in the right hand side do not depend on any oracles). Now we take the maximum of all points $\langle s_1,s_2,s_3\rangle\in S$ and get the bound for effective packing dimension of $S$ (with oracle $X$), and, therefore, for the classical packing dimension of $S$. In this way we get the following result that deals exclusively with classical dimensions:

\begin{proposition}\label{ineq1d}
For every set $S\subset \mathbb{R}^3$ and three its two-dimensional projections $S_{12}$, $S_{13}$ and $S_{23}$ we have
\begin{equation*}
2\dim_p S \le \dim_p S_{12} + \dim_p S_{13} + \dim_p S_{23}. 
\end{equation*}
\end{proposition}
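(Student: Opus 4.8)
The plan follows the template that the excerpt has already laid out in the discussion preceding the statement, so the proof is essentially a matter of assembling the pieces in the right order. First I would start from the complexity inequality~(\ref{ineq1a}), namely $2\KS(x,y,z)\le \KS(x,y)+\KS(x,z)+\KS(y,z)+O(\log n)$, which holds for all strings of length $n$ and remains valid relative to any oracle $X$. Applying it to the $n$-bit prefixes $(\alpha)_n,(\beta)_n,(\gamma)_n$ of the three coordinates of a point, dividing by $n$, and taking $\limsup$ yields the pointwise inequality~(\ref{ineq1b}) for effective packing dimensions, in its relativized form $2\edim_p^X(\alpha,\beta,\gamma)\le \edim_p^X(\alpha,\beta)+\edim_p^X(\alpha,\gamma)+\edim_p^X(\beta,\gamma)$, valid for every oracle $X$.

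Next I would invoke the point-to-set principle to convert the two-dimensional terms on the right into classical dimensions of the projections. The key move is to choose a single oracle $X$ that simultaneously minimizes the effective packing dimensions of all three projections $S_{12}$, $S_{13}$, $S_{23}$; as the excerpt notes, three oracles can be merged into one, and adding an oracle only decreases complexity. With this $X$ fixed, for any point $s=\langle s_1,s_2,s_3\rangle\in S$ the effective packing dimension of the pair $\langle s_1,s_2\rangle$ is at most $\edim_p^X(S_{12})=\dim_p S_{12}$, because $\langle s_1,s_2\rangle$ lies in the projection $S_{12}$, and similarly for the other two pairs. Substituting these three bounds into the relativized pointwise inequality gives, for every $s\in S$,
\[
2\edim_p^X(s_1,s_2,s_3)\le \dim_p S_{12}+\dim_p S_{13}+\dim_p S_{23}.
\]

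Finally I would take the supremum over all points $s\in S$. The right-hand side does not depend on $s$, so the supremum of the left-hand side, which is $2\edim_p^X(S)$, is bounded by the same quantity. Since $\dim_p S=\min_X \edim_p^X(S)\le \edim_p^X(S)$ for our particular choice of $X$, we obtain $2\dim_p S\le \dim_p S_{12}+\dim_p S_{13}+\dim_p S_{23}$, which is exactly the claimed inequality.

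The steps here are each individually routine, so there is no single deep obstacle; the part requiring the most care is the bookkeeping with the oracle. One must verify that the \emph{same} oracle $X$ can be used in all places at once: it must make all three projections' effective dimensions minimal, and simultaneously serve as the oracle in the pointwise inequality~(\ref{ineq1b}) (which is legitimate precisely because that inequality relativizes to an arbitrary oracle) and in the final application of the point-to-set principle to $S$ itself (where we only need the inequality $\dim_p S\le\edim_p^X(S)$, not minimality). Getting this alignment right — using minimality of $X$ on the right-hand side while using only the trivial direction of the principle on the left — is the one spot where a careless argument could go wrong.
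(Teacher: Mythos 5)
Your proposal is correct and follows essentially the same route as the paper: relativize inequality~(\ref{ineq1a}), pass to the effective packing-dimension inequality~(\ref{ineq1b}) via $\limsup$, fix one oracle $X$ minimizing the effective packing dimensions of all three projections, bound each pairwise term by the classical $\dim_p S_{ij}$, and take the supremum over points of $S$ together with $\dim_p S \le \edim^X_p(S)$. Your closing remark about needing minimality of $X$ only on the right-hand side while using just the trivial direction of the point-to-set principle on the left is exactly the bookkeeping the paper's argument relies on.
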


\subsection*{Example 2}

Now consider another inequality for Kolmogorov complexities of three strings mentioned earlier:
\begin{equation}
\KS(x)+\KS(x,y,z)\le \KS(x,y) + \KS(x,z)+O(\log n). \label{ineq2a}
\end{equation}
We can try the same reasoning, but some changes are necessary. First, we use that 
\[
\liminf x_n+\liminf y_n \le \liminf (x_n+y_n)
\]
(and that $\liminf\le \limsup$) to get an inequality that combines the effective Hausdorff and effective packing dimensions:
\begin{equation}
\edim^X_H(s_1)+\edim^X_H(\langle s_1,s_2,s_3\rangle)\le \edim^X_p (\langle s_1,s_2\rangle)+\edim^X_p (\langle s_1,s_3\rangle), \label{ineq2b}
\end{equation}
for every oracle $X$. Then, for some strong enough oracle~$X$, we get
\begin{equation*}
\edim^X_H(s_1)+\edim^X_H(\langle s_1,s_2,s_3\rangle)\le \dim_p S_{12}+\dim_p S_{13}. 
\end{equation*}
Still we cannot make any conclusions about the classical dimensions of the projection $S_1$ and the entire set $S$, since the point $s_1$ where the first term is maximal is unrelated to the point $s=\langle s_1,s_2,s_3\rangle$ when the second term is maximal. 

To see what we can do, let us recall that a similar problem appears for the combinatorial interpretation of inequalities for Kolmogorov complexity (see~\cite[Chapter 10]{suv} for details). The inequality~(\ref{ineq1a}) from our previous example has a direct combinatorial translation (a special case of the Loomis--Whitney inequality): if a three-dimensional body has volume\footnote{To be more combinatorial, one could consider finite sets and the cardinalities of those sets and their projections.} $V$ and its three projections have areas $V_{12}$, $V_{13}$ and $V_{23}$, then 
\begin{equation*}
2 \log V \le \log V_{12}+ \log V_{13}+\log V_{23} \qquad (\text{or } V^2 \le V_{12}V_{13}V_{23}). \label{ineq1e}
\end{equation*}
However, the inequality 
\[
\log V_1 + \log V \le \log V_{12}+ \log V_{13},
\]
written in a similar way for the inequality~(\ref{ineq2a}), does not always hold for a three-dimensional body. Consider, for example, the union of a cube $N\times N\times N$ with a parallelepiped $N^{1.5}\times 1\times 1$: the left hand side is about $1.5\log N+3\log N = 4.5\log N$, while the right hand side is about $2\log N + 2 \log N = 4\log N$ (for large $N$).

The solution for the combinatorial case is to allow splitting of the set $V$ into two parts: one has (relatively) small projection length, the other has (relatively) small volume. Namely, the following statement is true (see~\cite[Section 10.7]{suv}):
\begin{quote}
if for some three-dimensional set $S$ the areas $V_{12}$ and $V_{13}$ of its two-dimensional projections onto coordinates $(1,2)$ and $(1,3)$ satisfy the inequality 
\[
\log V_{12} + \log V_{13} \le a+b, 
\]
then the set can be split in two parts
\[
S=S'\cup S''
\]
in such a way that
\[
\log V'_1 \le a\quad \text{and}\quad  \log V'' \le b.
\]
Here $V'_1$ is the measure of the (one-dimensional) projection of $S'$ onto the first coordinate, and $V''$ is the volume of $S''$.
\end{quote}
(Why do we introduce $a$ and $b$? In a sense, we replace the inequality $u+v\le w$ by an equivalent statement ``for every $a$, $b$, if $w\le a+b$, then either $u\le a$ or $w\le b$''.)
\smallskip

We use a similar approach for dimensions, and get the following statement.
\begin{proposition}\label{split1}
Let $S\subset \mathbb{R}^3$, and let $a,b\ge 0$ be two numbers such that
\[
\dim_p S_{12} + \dim_p S_{13} \le a+b.
\]
Then there exist a splitting $S=S'\cup S''$ such that
\[
\dim_H (S'_1)\le a \quad \text{and} \quad \dim_H (S'')\le b.
\]
\end{proposition}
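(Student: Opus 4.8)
The plan is to follow the same point-to-set strategy used in Example 1, but this time built around the combinatorial splitting statement rather than a single clean inequality. First I would fix an oracle $X$ that simultaneously minimizes the effective packing dimensions of both projections $S_{12}$ and $S_{13}$, so that $\edim^X_p(S_{12}) = \dim_p S_{12}$ and $\edim^X_p(S_{13}) = \dim_p S_{13}$, and work relative to this $X$ throughout. The key move is to feed the complexity inequality~(\ref{ineq2a}) into the splitting machinery at the level of individual points: for each point $s = \langle s_1, s_2, s_3\rangle \in S$, inequality~(\ref{ineq2b}) tells us that $\edim^X_H(s_1) + \edim^X_H(s_1,s_2,s_3) \le \dim_p S_{12} + \dim_p S_{13} \le a+b$.

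Next I would define the splitting of $S$ directly in terms of this per-point inequality, mimicking the combinatorial ``either $u \le a$ or $w \le b$'' dichotomy. Concretely, I would put a point $s$ into $S'$ when its contribution to the first coordinate is small, say $\edim^X_H(s_1) \le a$, and into $S''$ otherwise; the plan is to arrange the definition so that on $S''$ we are forced to have $\edim^X_H(s_1, s_2, s_3) \le b$. The cleanest way to do this is probably to set $S' = \{ s \in S : \edim^X_H(s_1) \le a \}$ and $S'' = S \setminus S'$. On $S'$ the first-coordinate projection $S'_1$ consists of points $s_1$ with $\edim^X_H(s_1) \le a$, whence $\edim^X_H(S'_1) \le a$, and by the point-to-set principle $\dim_H(S'_1) \le \edim^X_H(S'_1) \le a$. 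On $S''$ every point satisfies $\edim^X_H(s_1) > a$, so the point inequality forces $\edim^X_H(s_1,s_2,s_3) \le (a+b) - \edim^X_H(s_1) < b$; taking the supremum over $s \in S''$ and again invoking the point-to-set principle gives $\dim_H(S'') \le \edim^X_H(S'') \le b$.

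The main obstacle I anticipate is the strictness/boundary issue in the threshold argument: the clean deduction on $S''$ uses $\edim^X_H(s_1) > a$ together with $\edim^X_H(s_1) + \edim^X_H(s_1,s_2,s_3) \le a+b$ to conclude $\edim^X_H(s_1,s_2,s_3) < b$, but one must be careful that the supremum over $S''$ does not creep up to exactly $b$ from the wrong side, and that points with $\edim^X_H(s_1) = a$ are assigned consistently. Using the non-strict cutoff $\le a$ for membership in $S'$ (so that $S''$ has strict inequality $> a$) handles this, and since we only need $\dim_H(S'') \le b$ the strict bound is more than enough. A secondary point worth stating carefully is that all three effective dimensions on the left of~(\ref{ineq2b}) are computed with respect to the \emph{same} fixed oracle $X$, which is legitimate because~(\ref{ineq2b}) holds for every oracle; once $X$ is fixed the splitting is a purely set-theoretic partition of $S$, and no further oracle juggling is needed.
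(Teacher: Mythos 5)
Your proposal is correct and follows essentially the same argument as the paper: fix one oracle $X$ minimizing the effective packing dimensions of both projections, apply the per-point inequality~(\ref{ineq2b}) to get $\edim^X_H(s_1)+\edim^X_H(s_1,s_2,s_3)\le a+b$ for every $s\in S$, and split $S$ according to the dichotomy ``$\edim^X_H(s_1)\le a$ or $\edim^X_H(s_1,s_2,s_3)\le b$,'' concluding via the point-to-set principle. Your explicit threshold definition of $S'$ and the remark on the boundary case $\edim^X_H(s_1)=a$ merely make precise the splitting the paper leaves implicit, so there is no substantive difference.
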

Here $S'_1$ is the projection of $S'$ onto the first coordinate.

\begin{proof}
As before, fix  an oracle $X$ that minimizes the effective packing dimensions of $S_{12}$ and $S_{13}$, and use it everywhere when speaking about complexities and effective dimensions. Then for every point $\langle s_1,s_2,s_3\rangle\in S$ we have 
\[
\edim^X_p (\langle s_1,s_2\rangle)+\edim^X_p(\langle s_1,s_3\rangle) \le a+b.
\] 
The inequality~(\ref{ineq2b}) then guarantees that
\[
\edim^X_H(s_1)+\edim^X_H(\langle s_1,s_2,s_3\rangle)\le a+b
\]
for every $\langle s_1,s_2,s_3\rangle \in S$, and therefore 
\begin{center}
either $\edim^X_H(s_1)\le a$ or $\edim^X_H(\langle s_1,s_2,s_3\rangle )\le b$ 
\end{center}
for every $\langle s_1,s_2,s_3\rangle \in S$. Therefore we may split $S$ into two sets $S'$ and $S''$ and guarantee that for all elements $\langle s_1,s_2,s_3\rangle\in S'$ we have $\edim^X_H(s_1)\le a$ and for all elements $\langle s_1,s_2,s_3\rangle\in S''$  we have $\edim^X_H(\langle s_1,s_2,s_3\rangle)\le b$.  This implies that $\dim_H(S'_1)\le a$ and $\dim_H(S'')\le b$, as required.
\end{proof}

\section{Corollaries for dimensions: general statement}\label{sec:general}

Proposition~\ref{split1} can be generalized (with essentially the same proof) to arbitrary linear inequalities for Kolmogorov complexities. Fix some $m$, and consider an $m$-tuple of strings $\langle x_1,\ldots,x_m\rangle$. For every non-empty $I\subset \{1,\ldots,m\}$ we consider a sub-tuple $x_I$ that consists of all $x_i$ with $i\in I$. Consider some linear inequality for Kolmogorov complexities $\KS(x_I)$ for all $I$; we assume that it is split between two parts to make the coefficients positive:
\begin{equation}
\sum_{I\in \mathcal{I}} \lambda_I \KS(x_I) \le \sum_{J\in\mathcal{J}}\mu_J \KS(x_J) + O(\log n).
  \label{general}
\end{equation}
Here $\mathcal{I}$ and $\mathcal{J}$ are two disjoint families of subsets of $\{1,\ldots,m\}$, and $\lambda_I$ and $\mu_J$ are positive reals defined for $I\in\mathcal{I}$ and $J\in\mathcal{J}$.
Assume that this inequality is true for all~$n$ and for all tuples $\langle x_1,\ldots,x_m\rangle$ of $n$-bit strings (with a constant in $O(\log n)$-notation that does not depend on $n$ and $x_1,\ldots,x_m$). As he have mentioned, this assumption can be equivalently reformulated for entropies:
\begin{equation*}
\sum_{I\in \mathcal{I}} \lambda_I H(\xi_I) \le \sum_{J\in\mathcal{J}}\mu_J  H(\xi_J)
\end{equation*}
for every tuple $\langle\xi_1,\ldots,\xi_m\rangle$ of random variables (Romashchenko's theorem, see~\cite[Section 10.6, Theorem 211]{suv}).

Then we have the corresponding result about dimensions:

\begin{theorem}\label{direct}
Under these assumptions, for every set $S\subset \mathbb{R}^m$ and every non-negative reals $a_I$ \textup(defined for all $I\in\mathcal{I}$\textup) such that 
\[
\sum_{J\in\mathcal{J}}\mu_J \dim_p S_J \le \sum_{I\in\mathcal{I}}\lambda_I a_I,
\]
where $S_J$ is the projection of $S$ onto $J$-coordinates, there exist a splitting $S=\bigcup_{I\in \mathcal{I}}S^I$ such that
\[
\dim_H(S^I_I)\le a_I.
\]
\end{theorem}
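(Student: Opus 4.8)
The plan is to generalize the proof of Proposition~\ref{split1} essentially verbatim, replacing the single two-way split by an $\lvert\mathcal{I}\rvert$-way split indexed by the family $\mathcal{I}$. First I would fix an oracle $X$ that simultaneously minimizes the effective packing dimensions $\edim^X_p(S_J)$ for all $J\in\mathcal{J}$; since $\mathcal{J}$ is finite, the countably-many oracles can be combined into one, so $\edim^X_p(S_J)=\dim_p S_J$ for every $J\in\mathcal{J}$. From this point on all complexities and effective dimensions are taken relative to $X$.

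Next I would apply the inequality~(\ref{general}) to the $n$-bit prefixes $(s_i)_n$ of the coordinates of an arbitrary point $s=\langle s_1,\ldots,s_m\rangle\in S$, divide by $n$, and pass to the appropriate limits. As in Example~2, the left-hand side $\KS(x_I)$ terms should be sent to $\liminf$, using $\sum_I \liminf \le \liminf \sum_I$ together with $\liminf\le\limsup$, while the right-hand side terms go to $\limsup$. This yields the pointwise dimension inequality
\[
\sum_{I\in\mathcal{I}}\lambda_I\,\edim^X_H(s_I)\le \sum_{J\in\mathcal{J}}\mu_J\,\edim^X_p(s_J)
\]
valid for every $s\in S$. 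Since $\edim^X_p(s_J)\le\edim^X_p(S_J)=\dim_p S_J$ for each $J$, the hypothesis $\sum_J\mu_J\dim_p S_J\le\sum_I\lambda_I a_I$ gives
\[
\sum_{I\in\mathcal{I}}\lambda_I\,\edim^X_H(s_I)\le\sum_{I\in\mathcal{I}}\lambda_I a_I
\]
for every $s\in S$.

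Now comes the combinatorial heart of the argument, the step replacing the ``either/or'' dichotomy. Because all $\lambda_I$ are positive, the last inequality forces, for each point $s$, the existence of at least one index $I\in\mathcal{I}$ with $\edim^X_H(s_I)\le a_I$: otherwise every term would strictly exceed $\lambda_I a_I$ and the sum would be too large. I would therefore assign each $s\in S$ to one such witnessing index $I(s)$ (choosing arbitrarily when several work) and set $S^I=\{s\in S: I(s)=I\}$, which gives a splitting $S=\bigcup_{I\in\mathcal{I}}S^I$ with $\edim^X_H(s_I)\le a_I$ for all $s\in S^I$. Finally, projecting onto the $I$-coordinates, $\edim^X_H(S^I_I)=\sup_{s\in S^I}\edim^X_H(s_I)\le a_I$, and since the classical dimension is bounded by the $X$-effective dimension, $\dim_H(S^I_I)\le\edim^X_H(S^I_I)\le a_I$, as required.

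I expect the only genuinely delicate point to be the pointwise limit passage: one must be careful that the $O(\log n)$ error term vanishes after dividing by $n$ (it does), and that mixing $\liminf$ on the left with $\limsup$ on the right is legitimate. The combinatorial splitting itself is elementary once the pointwise inequality is in hand, and the positivity of the $\lambda_I$ is exactly what makes the pigeonhole-style conclusion work.
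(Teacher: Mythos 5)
Your proposal is correct and follows essentially the same route as the paper's own proof: fix one oracle $X$ minimizing the packing dimensions of all projections $S_J$, derive the pointwise inequality $\sum_I \lambda_I \edim^X_H(s_I)\le\sum_J \mu_J \edim^X_p(s_J)$ from the complexity inequality by the $\liminf$/$\limsup$ passage, and split $S$ according to a witnessing index $I$ for each point. You merely make explicit two steps the paper leaves implicit (the pigeonhole argument using positivity of the $\lambda_I$, and the legitimacy of mixing $\liminf$ on the left with $\limsup$ on the right), which is fine.
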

The number of parts in the splitting is the same as the number of terms in the left-hand side of the inequality; they are indexed by $I\in\mathcal{I}$.  By $S^I_I$ we denote the $I$-projection of the part $S^I$; it is a set in $\mathbb{R}^k$ for $k=\#I$. The special case considered in Proposition~\ref{split1} has two terms on both sides of the inequality ($\#\mathcal{I}=\#\mathcal{J}=2$), and the coefficients $\lambda_I$ and $\mu_J$ are all equal to~$1$.

\begin{proof}
As before, consider some oracle $X$ that makes the effective packing dimensions of all $S_J$ for all $J\in\mathcal{J}$ minimal. Then we have
\[
\sum_{J\in\mathcal{J}}\mu_J \edim^X_p s_J\le \sum_{I\in\mathcal{I}}\lambda_I a_I,
\]
for every point $s=\langle s_1,\ldots,s_m\rangle\in S$; here $s_J$ stands for the projection of $s$ onto $J$-coordinates. The inequality for Kolmogorov complexities (that we assumed to be true) gives (after dividing by $n$ and taking the limit)
\[
\sum_{I\in \mathcal{I}}\lambda_I \edim^X_H s_I \le \sum_{J\in\mathcal{J}}\mu_J \edim^X_p s_J\left [\le \sum_{I\in\mathcal{I}}\lambda_I a_I\right],
\]
as before. (Note that the left hand side uses effective Hausdorff dimensions while the middle part uses effective packing dimensions, because of the limits.) This inequality is true for every point $s\in S$. Therefore, for every point $s\in S$ there exists some coordinate set $I\in\mathcal{I}$ such that
\[
\edim^X_H s_I \le a_I,
\]
and we can split the set $S$ according to these indices and get sets $S^I$ such that
\[
\edim^X_H s_I \le a_I
\]
for all points $s\in S^I$, and therefore 
\[
\dim_H S^I_I \le a_I,
\]
as required.
\end{proof}

\section{Equivalence}\label{equivalence}

We have shown that every (valid) linear inequality for entropies  can be translated to a statement about dimensions. In this section we show that this connection works in both directions:

\begin{theorem}\label{reverse}
If a linear inequality is not true for entropies, then the corresponding statement about dimensions, constructed as in Theorem~\textup{\ref{direct}}, is false.
\end{theorem}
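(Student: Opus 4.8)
The plan is to prove the contrapositive in constructive form: starting from an entropy inequality that fails, I build a single set $S\subset\mathbb{R}^m$ together with parameters $a_I$ that satisfy the hypothesis of Theorem~\ref{direct} but admit no splitting of the kind its conclusion guarantees. First I would invoke the Zhang--Yeung characterization recalled above: since the inequality~(\ref{general}) fails for some jointly distributed variables, it already fails for a \emph{group-characterizable} tuple. Thus I may fix a finite group $G$ with subgroups $G_1,\dots,G_m$ and set $h_K=\log_2[G:G_K]$, where $G_K=\bigcap_{i\in K}G_i$ (and $G_\emptyset=G$); these numbers form the entropy profile of the variables $\xi_i=gG_i$ for uniform $g\in G$, and the failure reads $\sum_{J\in\mathcal{J}}\mu_J h_J<\sum_{I\in\mathcal{I}}\lambda_I h_I$.

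Next I would realize this profile geometrically. Drawing an independent uniform sequence $g^{(1)},g^{(2)},\dots\in G$, I encode for each coordinate $i$ the coset sequence $(g^{(t)}G_i)_t$ into a real $\alpha_i\in[0,1]$, allocating the \emph{same} number $b=\log_2|G|$ of bits to every coordinate at every step (padding the $\log_2[G:G_i]$ bits of genuine coset information to width $b$), so that the ``$n$-bit prefix'' convention aligns all coordinates to the same number $N=n/b$ of steps. Let $S$ be the set of all tuples $\alpha=\langle\alpha_1,\dots,\alpha_m\rangle$ obtained this way. Two facts about $S$ are needed. For the \emph{upper} bounds, a pure counting argument --- any $n$-bit prefix of $\alpha_K$ determines $g^{(t)}G_K$ for $t\le N$ and so lies among at most $[G:G_K]^{N}$ strings --- gives $\edim^Y_p(\alpha_K)\le h_K/b$ for \emph{every} point $\alpha\in S$, every oracle $Y$ and every $K$; in particular $\dim_p S_J\le h_J/b$. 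For the matching \emph{lower} bound $\dim_H S\ge h_{\mathrm{full}}/b$ (with $h_{\mathrm{full}}=h_{\{1,\dots,m\}}$) I would use the point-to-set principle: relative to any oracle $Y$, the point coming from a $Y$-Martin--Löf-random sequence $(g^{(t)})$ has $\edim^Y_H(\alpha)= h_{\mathrm{full}}/b$, so $\edim^Y_H(S)\ge h_{\mathrm{full}}/b$ for every $Y$, and the minimum over $Y$ is still $h_{\mathrm{full}}/b$.

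The decisive structural estimate is a \emph{conditional} counting bound holding for \emph{all} points at once: given the $I$-prefix of $\alpha$ (which fixes $g^{(t)}G_I$ for $t\le N$), the full prefix can be completed in at most $[G_I:G_{\mathrm{full}}]^{N}=2^{(h_{\mathrm{full}}-h_I)N}$ ways, so $\KS^Y((\alpha)_n\cnd(\alpha_I)_n)\le(h_{\mathrm{full}}-h_I)N+O(\log n)$ for every $\alpha\in S$ and every $Y$. Dividing by $n$ and taking $\liminf$ gives, pointwise, $\edim^Y_H(\alpha)\le\edim^Y_H(\alpha_I)+(h_{\mathrm{full}}-h_I)/b$; passing to suprema over a subset $T\subseteq S$ and then to the minimizing oracle yields the \emph{fiber inequality} $\dim_H T\le\dim_H(T_I)+(h_{\mathrm{full}}-h_I)/b$, valid for every $T\subseteq S$ and every $I$.

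Finally I would assemble the counterexample. Choosing $a_I=h_I/b-\delta$ for a small $\delta>0$, the strict gap $\sum_J\mu_J h_J<\sum_I\lambda_I h_I$ together with $\dim_p S_J\le h_J/b$ makes the hypothesis $\sum_J\mu_J\dim_p S_J\le\sum_I\lambda_I a_I$ hold. If a splitting $S=\bigcup_{I\in\mathcal{I}}S^I$ with $\dim_H(S^I_I)\le a_I$ existed, then by finite stability of Hausdorff dimension some part $S^{I_0}$ would satisfy $\dim_H S^{I_0}=\dim_H S=h_{\mathrm{full}}/b$, and the fiber inequality with $T=S^{I_0}$, $I=I_0$ would force $\dim_H(S^{I_0}_{I_0})\ge h_{I_0}/b>a_{I_0}$, a contradiction; hence the dimension statement is false. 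I expect the realization step --- proving that the constructed set and its projections have \emph{exactly} the prescribed dimensions --- to be the main obstacle, and within it the Hausdorff lower bound, which is precisely where the point-to-set principle and a random point relative to an arbitrary oracle are essential (the unequal alphabet sizes $[G:G_i]$ forcing the padded, equal-rate encoding are a routine but unavoidable annoyance). By contrast, the conditional counting bound behind the fiber inequality is the clean combinatorial heart of the argument, and the fact that it holds for every point and every oracle simultaneously is exactly what legitimizes the finite-stability step.
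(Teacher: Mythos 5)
Your proposal is correct and takes essentially the same route as the paper's proof: a group-characterizable counterexample (the Chan--Yeung result the paper invokes), a Cantor-type product set whose projections realize the entropy profile $h_K$ as dimensions via the point-to-set principle, a conditional-counting ``fiber'' inequality that is exactly the paper's lemma on uniform projections (proved there by the same prefix-plus-preimage counting), and the finite stability of Hausdorff dimension to reach the contradiction. The only differences are cosmetic --- you encode cosets in padded binary blocks rather than base-$N$ digits, and you phrase the final step by selecting a full-dimension part of the splitting instead of showing every part has strictly smaller dimension --- so the two arguments coincide in substance.
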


\begin{proof}[Proof sketch]
We combine several well-known tools to achieve this result.

\textbf{1}. The first one is the characterization of inequalities in terms of the size of subgroups mentioned above. Let $G$ be some finite group, and let $H_1,\ldots,H_m$ be its subgroups. (We do not require them to be normal.) For every element $g\in G$ consider the cosets $g_1=gH_1$,\ldots, $g_m = gH_m$. If $g\in G$ is taken uniformly at random, the cosets $g_1,\ldots,g_m$ become (jointly distributed) random variables with common probability space $G$. Each $\xi_i$ is uniformly distributed on the family of all cosets $gH_i$; the size of this family is $\#G/\#H_i$, and the entropy of $\xi_i$ is $\log(\#G/\#H_i)$. 

We may consider tuples of them: let $g_I$ be the tuple of all $g_i$ with $i\in I$. It is easy to see that for every $I$ the values of $g_I$ correspond to cosets $gH_I$ for $H_I=\cap_{i\in I} H_i$, and the entropy $H(g_I)$ is $\log (\#G /\#H_I)$.

The result of Chan and Yeung~\cite{chan-yeung} says that the tuples of random variables constructed in this way are enough for testing inequalities: if an inequality 
\[
\sum_{I \in \mathcal{I}}  \lambda_I H(\xi_I) \le \sum_{J\in\mathcal{J}}\mu_J H(\xi_J).
\]
is not universally true (for all tuples of random variables $\langle \xi_1,\ldots,\xi_n\rangle$), then there exists a counterexample with groups, i.e., a finite group $G$ and its subgroups $H_1,\ldots,H_m$ that make the inequality false:
\[
\sum_{I \in \mathcal{I}}  \lambda_I H(g_I) > \sum_{J\in\mathcal{J}}\mu_J H(g_J).
\]
Note that the latter inequality can be reformulated in terms of sizes of a group, its subgroups and their intersections.

So we may assume that the inequality is not true for some group $G$ and its subgroups $H_i$, and use them to construct a counterexample that shows that the corresponding statement about dimensions is false.

\textbf{2}. For that we use standard results about the dimension of Cantor-type sets. Consider $N$-ary positional system where every real from $[0,1]$ is represented by an infinite sequence of digits $0\ldots N-1$. (As usual, the double representations for finite $N$-ary fractions do not matter much, and we ignore this problem.) Let $X$ be a subset of $\{0,\ldots,N-1\}$. Consider the set $C_X$ of all $N$-ary fractions with digits only in $X$ (for example, the classical Cantor set is $C_{\{0,2\}}$ for $N=3$). It is well known that $\dim_H(C_X)=\dim_p(C_X)=\log\#X/\log N$ (and this can be easily derived from the point-to-set principle). 

One can consider similarly defined sets in $\mathbb{R}^2, \mathbb{R}^3$ etc. For example, let $Y$ be a subset of $\{0,\ldots, N-1\}\times \{0,\ldots,N-1\}$. Then one can construct a set $C_Y\subset [0,1]\times [0,1]$ that consists of the pairs of $N$-ary fractions $(u_1u_2\ldots,v_1v_2\ldots)$ such that $(u_i,v_i)\in Y$ for every $Y$. The Hausdorff and packing dimensions of the set $C_Y$ are $\log\#Y /\log N$. 

For subsets of $[0,1]^m$ the construction goes as follows. Consider some set $A\subset \{0,\ldots,N-1\}^m$.  (Later, we let $m$ be the number of variables in the inequality we consider, and construct the set $A$ starting from the group $G$ and its subgroups $H_1,\ldots,H_m$.) Then construct the set $C_A \subset [0,1]^m$ such that $\langle x^1_1x^1_2\ldots,x^2_1x^2_2\ldots,\ldots,x^m_1x^m_2\ldots\rangle\in C_A$ if and only if $\langle x^1_i,\ldots,x^m_i\rangle\in A$ for all $i$. The dimension (packing or Hausdorff) of $C_A$ is $\log \#A/\log N$. 

The projection of the set $C_A$ on some set $I\subset\{1,\ldots,m\}$ of coordinates is the set $C_{A_I}$ of the same type that is constructed starting from the projection $A_I$ of $A$ onto the same coordinates. Therefore, to find the dimensions of all projections of $C_A$, we need to know only the size of the projections of $A$.

\textbf{3}. We can start this construction with a finite set $A\subset U_1\times\ldots\times U_m$ for arbitrary finite sets $U_1,\ldots,U_m$. Then we identify arbitrarily all $U_i$ with some subsets of $\{1,\ldots,N\}$ for large enough $N$, and construct the corresponding set $C_A\subset [0,1]^m$. For that we need that $\#U_i\le N$ for all $i$; the exact choice of $N$ is not important since the factor $1/\log N$ is the same for all the projections. 

Using this remark, we let $U_i$ be the range of $g_i$, i.e., the family of all cosets with respect to the subgroup $H_i$, and let 
\[
A = \{ (gH_1,\ldots,gH_m)\colon g\in G\}.
\]
Then, as we have seen, the dimension of $(C_A)_I$ is proportional to the logsize of the corresponding projection $A_I$, which equals the entropy of $g_I$:
\[
\dim (C_A)_I = \frac{H(g_I)}{\log N}.
\]
Therefore, we have
\[
\sum_{I \in \mathcal{I}}  \lambda_I \dim(C_A)_I > \sum_{J\in\mathcal{J}}\mu_J \dim(C_A)_J,
\]
assuming that we started with a counterexample to the inequality for entropies that involves group $G$ and subgroups $H_1,\ldots,H_m$. Note that we do not need to specify whether we consider packing or Hausdorff dimensions, since for our sets they are the same.

But this is not what we need: we need to show that a splitting of $C_A$ into sets with bounded dimensions of projections does not exist for some bounds $a_I$. Let us choose $a_I$ slightly smaller than $\dim(C_A)_I$ so that still
 \[
\sum_{I \in \mathcal{I}}  \lambda_I a_I > \sum_{J\in\mathcal{J}}\mu_J \dim(C_A)_J.
\]
We want to show that the assumption about dimensions is false for those $a_I$, namely, that one cannot split $C_A$ into a family of $C^I$ (for $I\in\mathcal{I}$) in such a way that
\[
\dim_H C^I_I \le a_I
\]
(here we have to specify the Hausdorff dimension since for the sets $C^I_I$ the Hausdorff and packing dimensions may differ). For that we note that the last inequality implies
\[
\dim_H C^I_I < \dim (C_A)_I
\]
due to the choice of $a_i$ that are smaller than $\dim (C_A)_I$. It remains to show that the last inequality implies
\[
\dim_H C^I < \dim C_A;
\]
then we get a contradiction, since the set $C_A$ cannot be represented as a finite union of sets of smaller Hausdorff dimensions.

To get the bound for $\dim_H C^I$ in terms of the dimension of its $I$-projection we use special properties of the set $A$ that corresponds to the group $G$ and its subgroups $H_1,\ldots,H_m$. Namely, for every set of indices $I$ the projection of $A$ onto $A_I$ is uniform (every element that has preimages has the same number of preimages; we already mentioned a similar property when saying that the variable $g_I$ is uniformly distributed on its image). Indeed, let $H=H_{\{1,\ldots,m\}}$ be the intersection of all subgroups: $H=\bigcap_{i=1,\ldots,m}H_i$, and let $H_I$ be the intersection of some of them: $H_I=\bigcap_{i\in I} H_i$. Then $H\subset H_I$ and we have surjective mappings:
\[
G \to G/H \to G/H_I.
\]
The projection of $A$ onto $A_I$ is the second mapping; the required property (the same number of preimages) is true since both the mapping $G\to G/H$ and the composition $G\to G/H_I$ have this property. The number of preimages for the projection is $\#H''/\#H'=\#A/\#A_I$.

We use this property and the following lemma.

\begin{lemma}
Assume that the projection $\pi_I \colon A \to A_I$ \textup(only $I$-coordinates remain\textup) is uniform. Consider the set $C_A$ and its projection $(C_A)_I$; let $d$ be the difference in their dimensions: $d=\dim C_A - \dim (C_A)_I$. Then, for every $X\subset C_A$ we have 
\[
\dim_H X \le  \dim_H X_I + d.
\]
\end{lemma}

Note that this lemma deals with two different projections: mappings $\pi_I\colon A \to A_I$ (finite sets) and $\Pi_I: C_A \to (C_A)_I$ (coordinate spaces); the second one applies the first one simultaneously for all positions in $N$-ary notation.

\begin{proof}[Proof of the lemma]
The dimension of $C_A$ is equal to $\log\#A/\log N$, and the dimension of $(C_A)_I=\Pi_I(C_A)$ is equal to $\log\#A_I/\log N$, so the difference is equal to 
\[
\log (\#A/\#A_I)/\log N.
\] 
The ratio $\#A/\#A_I$ is the size of preimages for the uniform projection $\pi_I\colon A\to A_I$. To specify the first $k$ digits in a point $x\in X$ we have to specify $k$ digits of its projection $x_I \in X_I$, and also for every of $k$ positions choose one of the preimages of some element of $A_I$. Now we may apply the point-to-set principle to get the desired result.
\end{proof}

The application of this lemma, as we have discussed, finishes the proof of Theorem~\ref{reverse}.
\end{proof}

\subsubsection*{Discussion}
Theorem~\ref{direct} applies the point-to-set principle to some type of statements in the dimension theory. Why these (rather exotic) statements could be interesting? There are two possible reasons.  

First, we get one more reason to consider the class of linear inequalities that are true for entropies of tuples: it can be equivalently characterized in terms of Kolmogorov complexity, in combinatorial terms (size of projections of multidimensional sets), as inequalities for group sizes --- and now in terms of dimensions. Second, the point-to-set principle was used to prove results about dimensions using algorithmic information theory. Theorem~\ref{reverse} shows that the reverse direction is also possible, at least in theory (it would be quite surprising to see a proof of some new inequality that goes this way).

\subsubsection*{Acknowledgments}

The author is grateful to all his colleagues, especially to Andrei Romashchenko, the members of the ESCAPE team in LIRMM and the Kolmogorov seminar, and the participants of the meetings organized by the American Institute of Mathematics and Dagstuhl in 2022 where some of the work presented here was discussed. Part of the work was supported by FLITTLA ANR-21-CE48-0023 grant.

\end{document}